\pdfoutput=1
\documentclass[a4paper,UKenglish,cleveref]{lipics-v2021}%
\nolinenumbers

\usepackage[scr=boondoxo,scrscaled=1.05]{mathalpha}

\usepackage[utf8]{inputenc}
\usepackage[T1]{fontenc}
\usepackage{microtype}
\renewcommand{\phi}{\varphi}
\renewcommand{\epsilon}{\varepsilon}
\usepackage[]{algorithm2e}

\usepackage{comment}

\usepackage{lipsum}

\usepackage{amsmath}
\usepackage{amssymb}
\usepackage{amsthm}
\usepackage{mathtools}
\usepackage{todonotes}
\let\tempcup\cup
\let\tempcap\cap
\usepackage{mathabx}
\let\cup\tempcup
\let\cap\tempcap
\usepackage{graphicx}

\catcode`\@ = 11
\newdimen\@InsertBoxMargin
\newcount\@numlines    %
\newcount\@linesleft   %
\def\ParShape{%
    \@numlines = 0
    \def\@parshapedata{ }%
    \afterassignment\@beginParShape
    \@linesleft
}%
\def\@beginParShape{%
    \ifnum \@linesleft = 0
      \let\@whatnext = \@endParShape
    \else
      \let\@whatnext = \@readnextline
    \fi
    \@whatnext
}%
\def\@endParShape{%
    \global\parshape = \@numlines \@parshapedata
}%
\def\@readnextline#1 #2 #3 {%
    \ifnum #1 > 0
      \bgroup  %
        \dimen0 = \hsize
        \advance \dimen0 by -#2  %
        \advance \dimen0 by -#3  %
        \count0 = 0
        \loop
          \global\edef\@parshapedata{%
            \@parshapedata    %
            #2                %
            \space            %
            \the\dimen0       %
            \space            %
          }%
          \advance \count0 by 1
          \ifnum \count0 < #1
        \repeat
      \egroup
      \advance \@numlines by #1
    \fi
    \advance \@linesleft by -1
    \@beginParShape
}%
\newbox\@boxcontent     %
\newcount\@numnormal    %
\newdimen\@framewidth   %
\newdimen\@wherebottom  %
\newif\if@byframe       %
\@byframefalse
\def\InsertBoxC#1{%
  \leavevmode
  \vadjust{
    \vskip \@InsertBoxMargin
    \hbox to \hsize{\hss#1\hss}
    \vskip \@InsertBoxMargin
  }%
}%
\def\InsertBoxL#1#2{%
  \@numnormal = #1
  \setbox\@boxcontent = \hbox{#2}%
  \let\@side = 0
  \futurelet \@optionalparameter \@InsertBox
}
\def\InsertBoxR#1#2{%
  \@numnormal = #1
  \setbox\@boxcontent = \hbox{#2}%
  \let\@side = 1
  \futurelet \@optionalparameter \@InsertBox
}%
\def\@InsertBox{%
  \ifx \@optionalparameter [
    \let\@whatnext = \@@InsertBoxCorrection
  \else
    \let\@whatnext = \@@InsertBoxNoCorrection
  \fi
  \@whatnext
}%
\def\@@InsertBoxCorrection[#1]{%
  \ifx \@side 0
    \@@InsertBox{#1}{0}{{\the\@framewidth} 0cm}%
  \else
    \@@InsertBox{#1}{1}{0cm {\the\@framewidth}}%
  \fi
}%
\def\@@InsertBoxNoCorrection{%
  \@@InsertBoxCorrection[0]%
}%
\def\@@InsertBox#1#2#3{%
  \MoveBelowBox
  \@byframetrue
  \@wherebottom = \baselineskip
  \multiply \@wherebottom by \@numnormal
  \advance \@wherebottom by 2\@InsertBoxMargin
  \advance \@wherebottom by \ht\@boxcontent
  \advance \@wherebottom by \pagetotal
  \ifdim \pagetotal = 0cm
    \advance \@wherebottom by -\baselineskip  %
  \fi
  \advance \@wherebottom by #1\baselineskip
  \@framewidth = \wd\@boxcontent
  \advance \@framewidth by \@InsertBoxMargin
  \bgroup  %
    \ifdim \pagetotal = 0cm
      \dimen0 = \vsize
    \else
      \dimen0 = \pagegoal
    \fi
    \ifdim \@wherebottom > \dimen0
      \immediate\write16{+--------------------------------------------------------------+}%
      \immediate\write16{| The box will not fit in the page. Please, re-edit your text. |}%
      \immediate\write16{+--------------------------------------------------------------+}%
      \vrule width \overfullrule
    \fi
  \egroup
  \prevgraf = 0
  \vbox to 0cm{%
    \dimen0 = \baselineskip
    \multiply \dimen0 by \@numnormal
    \advance \dimen0 by -\baselineskip
    \setbox0 = \hbox{y}%
    \vskip \dp0
    \vskip \dimen0
    \vskip \@InsertBoxMargin
    \ifnum #2 = 1
      \vtop{\noindent \hbox to \hsize{\hss \box\@boxcontent}}%
    \else
      \vtop{\noindent \box\@boxcontent}%
    \fi
    \vss
  }%
  \vglue -\parskip
  \vskip -\baselineskip
  \everypar = {%
    \ifdim \pagetotal < \@wherebottom
      \bgroup  %
        \dimen0 = \@wherebottom
        \advance \dimen0 by -\pagetotal
        \divide \dimen0 by \baselineskip
        \count1 = \dimen0
        \advance \count1 by 1
        \advance \count1 by -\@numnormal
        \ifnum #2 = 1
          \ParShape = 3
                      {\the\@numnormal}   0cm   0cm
                      {\the\count1}       0cm   {\the\@framewidth}
                      1                   0cm   0cm
        \else
          \ParShape = 3
                      {\the\@numnormal}   0cm                  0cm
                      {\the\count1}       {\the\@framewidth}   0cm
                      1                   0cm                  0cm
        \fi
      \egroup
    \else
      \@restore@    %
    \fi
  }%
  \def\par{%
      \endgraf
      \global\advance \@numnormal by -\prevgraf
      \ifnum \@numnormal < 0
        \global\@numnormal = 0
      \fi
      \prevgraf = 0
  }%
}%
\def\MoveBelowBox{%
  \par
  \if@byframe
    \global\advance \@wherebottom by -\pagetotal
    \ifdim \@wherebottom > 0cm
      \vskip \@wherebottom
    \fi
    \@restore@
  \fi
}%
\def\@restore@{%
    \global\@wherebottom = 0cm
    \global\@byframefalse
    \global\everypar = {}%
    \global\let \par = \endgraf
    \global\parshape = 1 0cm \hsize
}%
\ifx \documentclass \@Dont@Know@What@It@Is@
\else
  \let \pageno = \c@page
\fi

\catcode`\@ = 12
 \usepackage{bm}
\usepackage{bbm}

\usepackage{calc}
\usepackage{float}
\usepackage{booktabs}
\usepackage{graphicx}
\usepackage{tikz}
\usetikzlibrary{positioning,arrows.meta,trees,shapes,graphs, graphs.standard,decorations.pathmorphing, calc}

\newcommand{\N}{\mathbb{N}}
\newcommand{\bigO}{\ensuremath{\mathcal{O}}}
\newcommand{\assign}{\ensuremath{\mathscr{I}}}

\newcommand{\restrict}[1]{\raisebox{-.2ex}{$|$}_{#1}}

\DeclareMathOperator{\out}{out}
\renewcommand{\hat}{\widehat}
\usepackage{xcolor} %
\usepackage[most]{tcolorbox}

\usepackage{hyperref}
\usepackage[capitalise,noabbrev]{cleveref}
\usepackage{lipsum}

\author{Christoph Berkholz}{Technische Universität Ilmenau, Germany}{christoph.berkholz@tu-ilmenau.de}{https://orcid.org/0000-0002-3554-517X
}{Funded by the Deutsche Forschungsgemeinschaft (DFG, German
Research Foundation) – project number 414325841.}%
\author{Matthäus Micun}{Technische Universität Ilmenau, Germany}{matthaeus.micun@tu-ilmenau.de}{https://orcid.org/0009-0003-6658-8927}{Funded by the Deutsche Forschungsgemeinschaft (DFG, German
Research Foundation) – project number 414325841.}%

\author{Igor Razgon}{Durham University, United Kingdom}{igor.razgon@gmail.com}{https://orcid.org/0000-0002-7060-5780}{}%

\authorrunning{C. Berkholz, M. Micun I. Razgon}

\begin{document}

\pagenumbering{arabic}
\date{\today}

\title{Restructuring Tree Decision Diagrams}

\Copyright{Christoph Berkholz, Matthäus Micun, Igor Razgon, Wim Van den Broeck}
\keywords{Knowledge Compilation, TDD, OBDD, d-SDNNF}
\maketitle

\begin{CCSXML}
    <ccs2012>
       <concept>
           <concept_id>10003752.10003777.10003785</concept_id>
           <concept_desc>Theory of computation~Knowledge Compilation</concept_desc>
           <concept_significance>500</concept_significance>
           </concept>
     </ccs2012>
\end{CCSXML}
    
\ccsdesc[500]{Theory of computation~Knowledge Compilation}

\begin{abstract}
    Tree Decision Diagrams (TDDs) are a data structure recently introduced by Capelli et al. (SAT 2026). 
    They are structured along a vtree and the size of their canonical form lies between Ordered Binary Decision Diagrams (OBDDs) and deterministic structured DNNF circuits (d-SDNNFs). While the succintness gap between TDD and d-SDNNF is exponential, only a quasipolynomial separation between OBDD and TDD has been shown and it was left as open question whether this is optimal.  
    We answer this question affirmatively by showing that every TDD can be transformed to an equivalent OBDD of quasipolynomial size. 
    
    Although this might be seen as a weakness, our second result shows that TDDs share another desirable property with OBDDs that is not known to hold for d-SDNNF: Given a TDD and another target vtree, it is possible to construct the minimal and canonical TDD respecting the new vtree in time polynomial in the input and output. As a result we also obtain that the equivalence test between TDDs over different vtrees can be done in polynomial time.    
    \end{abstract}

\section{Introduction}

Decision diagrams and decomposable circuits are versatile data
structures for representing Boolean functions and several variants have been introduced and studied in \emph{knowledge compilation}
\cite{DBLP:journals/jair/DarwicheM02,DBLP:journals/corr/abs-2404-09674}.\footnote{The reader may also
  consult \url{https://circuitzoo.net/} for an up-to-date knowledge
  compilation map.}
Over the last years, these data structures and the ideas behind them
have also been used for different query evaluation algorithms, see
\cite{DBLP:journals/sigmod/AmarilliC24} for a survey on
knowledge compilation in database theory. 
In particular, \emph{factorized databases}
\cite{DBLP:journals/tods/OlteanuZ15} or \emph{relational circuits} \cite{DBLP:conf/icdt/Capelli26} use decomposable circuits to
succinctly represent query results. This approach is known to
match the best query evaluation algorithms for join queries
\cite{DBLP:conf/icalp/BerkholzV23,DBLP:conf/icdt/BerkholzV26} and is also
applicable to queries with negation \cite{DBLP:conf/icdt/CapelliI24}. Furthermore,
knowledge compilation formats that allow efficient model counting
have been used and studied for query evaluation on probabilistic
databases (known as \emph{query compilation} \cite{DBLP:journals/ftdb/BroeckS17}). Examples include  
ordered binary decision diagrams (OBDD)
\cite{DBLP:conf/icdt/JhaS12,DBLP:journals/mst/JhaS13,DBLP:journals/tods/FinkO16}, free
binary decision diagrams (FBDD)
and decision decomposable circuits (dec-DNNF) \cite{DBLP:journals/tods/Beame0RS17}, deterministic
structured DNNF (d-SDNNF) \cite{DBLP:conf/pods/BovaS17}, and deterministic
decomposable circuits (d-DNNF and d-D) \cite{DBLP:conf/amw/MonetO18,DBLP:conf/pods/Monet20}.
Decomposable circuits also have numerous applications for MSO query evaluation on trees
(see \cite{DBLP:journals/sigmod/AmarilliC24}).
Current developments in this field have recently been discussed at the Dagstuhl
seminar 26221 \emph{Knowledge Compilation in Artificial Intelligence,
  Databases, and Formal Methods} \cite{DagstuhlSeminar26221}.

In general there is a trade-off between usefulness and succinctness of
the knowledge compilation targets. At one end there are OBDDs
\cite{DBLP:journals/tc/Bryant86} which decide on the Boolean variables
along a variable order. They have the least compactness among the
formats mentioned so far but support many polynomial time queries such as
model counting and equivalence test as well as polynomial time
transformations such as negation and singleton forgetting
(existential quantification of a single Boolean variable). Moreover, OBDDs
have a canonical minimal form for a given variable order and minimizing an OBDD can also be done in polynomial time.

A generalization of OBDDs are \emph{deterministic structured decomposable circuit}s (d-SDNNF) \cite{darwichedsdnnf} which allow exponentially more succinct representations for some
Boolean functions \cite{bova2016sdds}. The ``structuredness'' means
that the variables are also ordered, but not along a linear order as
for OBDDs but by using a \emph{variable tree} (\emph{vtree}).
The succinctness of d-SDNNFs compared to OBDDs comes at a prize: it is not known, whether
equivalence between two d-SDNNF over different vtrees can be tested in polynomial time and
it has recently be shown, that they cannot be negated in polynomial
time and do not allow polynomial time singleton forgetting
\cite{Vinall-Smeeth_2024}. Already in 2017, Bova and Szeider introduced a restricted
\emph{canonical} form of d-SDNNF \cite[Lemma~4]{DBLP:conf/pods/BovaS17} and
analyzed their use for evaluating UCQs with and without inequalities
on probabilistic databases. Finally, they called for a deeper inspection:

\begin{quotation}
\emph{The canonical structured deterministic forms induced by
factorized implicants, introduced in Section 3.2.1, deserve in
our opinion both a direct investigation in the framework of
the knowledge compilation map, and a thorough comparison with the data structures used in factorized databases
} [\ldots]
\end{quotation}

The \emph{direct investigation} was only started recently: Capelli, Choi,
Mengel, Muñoz, and Van~den~Broeck gave a precise syntactic definition
of \emph{Tree Decision Diagrams (TDD)} \cite{tdd} that captures these
canonical forms and they analyzed them in depth.
In particular, they showed that TDDs share a number of desirable
properties with OBDDs such as polynomial time negation and singleton
forgetting. At the same time, TDDs of polynomial size are able to represent instances of
bounded treewidth, which is not possible for
OBDDs. As a consequence, it follows by the construction of Razgon \cite{razgon2014obdds} that TDDs are
at least quasipolynomially more succinct than OBDDs. On the other hand
they are exponentially less succinct than general d-SDNNFs and its subclass
\emph{sentential decision diagrams} (SDDs) \cite{tdd,bova2016sdds}.
As announced in \cite{DagstuhlSeminar26221}, TDDs are well suited for
bottom-up knowledge compilation and the implementation \emph{TiDiDi} has been quite
successful for a bottom-up solver at the 7th Model Counting
Competition (MC 2026) \cite{Competition2021_23}.\footnote{See \url{https://mccompetition.org/}.}

\subparagraph*{Our contribution} Our main contribution are two
independent results on TDDs that both ultimately rely on restructuring
the underlying vtree.

In \cite{tdd} the
authors obtained only a quasipolynomial succinctness gap between OBDDs and TDDs, whereas it is
known that other structured formats such as SDDs and d-SDNNF can be exponentially more succinct than
OBDDs \cite{bova2016sdds}. The authors stated as open problem, whether
this succinctness gap is optimal or if an exponential gap can be shown.
We answer this question exhaustively by showing that \emph{TDDs can be quasipolynomially
simulated by OBDDs} (Theorem~\ref{th:simulation}) and hence the
separation is optimal. To prove this
theorem, we perform a stepwise reordering of the vtree until it has a
linear form which then corresponds to an OBDD over a linear order.
The quasipolynomial blow-up then follows by a careful analysis of the
recurrence.

Although our simulation result might be seen as a weakness of TDDs,
it should be contrasted with the following related quasipolynomial
simulations of decomposable circuits by decision diagrams. In light of
these results, the property seems rather natural and might be a
further sign for the robustness of the new formalism:
\begin{itemize}
\item Decomposable negation normal form (DNNF) can be
  quasipolynomially simulated by nondeterministic read-once branching programs (nROBP)
    \cite{DBLP:conf/cp/Razgon15}.
\item Decision decomposable negation normal form (dec-DNNF) can be
  quasipolynomially simulated by read-once branching programs = free binary decision diagrams (FBDD)  \cite{BeameFBDDSimulation}.
\end{itemize}

Our second results shows that \emph{equivalence of TDDs over different
vtrees $T_1$ and $T_2$ can be checked in polynomial time}
(Lemma~\ref{lem:tdd-different-vtrees}). While this
is known to hold for OBDDs with different orders, it is currently not
known whether a polynomial time equivalence check is possible for other structured
formats such as SDDs or d-SDNNFs. To achieve
this result, we prove a slightly stronger result on reordering TDDs
that might be of independent interest:
there is an algorithm that receives a TDD $D_1$ over a vtree $T_1$ and
another vtree $T_2$ and computes an equivalent minimal TDD $D_2$ over
$T_2$ in time polynomial in the size of $D_1$ and $D_2$ (Theorem~\ref{th:restructuring}). As a nice consequence we also conclude that it is possible to check equivalence between TDDs and d-SDNNFs (Lemma~\ref*{lem:eqtdddsdnnf}).

\section{Preliminaries}\label{sec:prelim}

\begin{definition}[variable trees]
    A \emph{variable tree}, or just vtree, over a set of variables $X$ is a pair $(T,b)$, where T is a rooted full binary tree, and $b$ is a bijection from the leaves of T to $X$. For any $v \in V(T)$ we let $T_s$ be the subtree rooted at $s$, and $b(T_s)$ is the image of b restricted to the leaves of $T_s$.
\end{definition}

\begin{figure}
    \resizebox{\textwidth}{!}{
\tikzset{
  multi box/.style={
    draw,
    inner sep=0pt,
    outer sep=0pt
  },
  multi box divider/.style={
    draw
  },
}

\NewDocumentCommand{\multiboxnode}{O{} m m O{2} O{6mm}}{%
  \pgfmathsetlengthmacro{\multiboxtotalwidth}{#4*(#5)}%

  \node[
    multi box,
    #1,
    minimum width=\multiboxtotalwidth,
    minimum height=#5
  ] (#2) at #3 {};

  \foreach \i in {1,...,#4}{%
    \coordinate (#2-\i-center) at
      ($(#2.west)!{(\i-0.5)/#4}!(#2.east)$);
  }

  \ifnum#4>1\relax
    \foreach \i in {1,...,\number\numexpr#4-1\relax}{%
      \draw[multi box divider]
        ($(#2.north west)!{\i/#4}!(#2.north east)$)
        --
        ($(#2.south west)!{\i/#4}!(#2.south east)$);
    }
  \fi
} 
\tikzset{
  pics/leafnode/.style args={#1}{
    code={
      \draw[draw=gray!50, thick] (-1.1,-0.6) rectangle (1.2,0.6);
      \node[ellipse,draw] (-1) at (-0.5,0) {$#1$};
      \node[ellipse,draw] (-0) at (0.5,0) {$\neg #1$};
    }
  }
}

\begin{tikzpicture}
    [
        auto,
        dnnf/.style={ellipse,draw},
        vert/.style={circle,draw},
        sink/.style={rectangle,draw},
        vtree/.style={circle, inner sep=2mm,draw},
        bend angle=15
    ]

    \begin{scope}[
        scale=1,
        xshift=-8cm,
        yshift=4cm
    ]

    \node[vert] (root) at (0,0) {$A$};
    \node[vert] (Bleft) at (-1,-1) {$B$};
    \node[vert] (Bright) at (1,-1) {$B$};
    \node[vert] (C) at (0,-2) {$C$};

    \node[vert] (Dleft) at (-1,-3) {$D$};
    \node[vert] (Dright) at (1,-3) {$D$};

    \node[vert] (E) at (-1,-4.5) {$E$};

    \node[sink] (0sink) at (-1,-6) {$0$};
    \node[sink] (1sink) at (1,-6) {$1$};

    \draw[->] (root) -- node[swap]  {0} (Bleft);
    \draw[->] (root) -- node        {1} (Bright);
    \draw[->] (Bleft) -- node       {1} (C);
    \draw[->] (Bright) -- node[swap]{0} (C);

    \draw[->, bend right=30] (Bleft) to node[swap]  {0} (0sink);
    \draw[->, rounded corners] (Bright) .. node {1} controls (2,-4) .. (0sink);

    \draw[->] (C) -- node[swap]     {1} (Dleft);
    \draw[->] (C) -- node           {0} (Dright);

    \draw[->] (Dleft) -- node[swap] {0} (E);
    \draw[->] (Dleft) -- node[swap,yshift=2mm] {1} (1sink);

    \draw[->] (Dright) -- node      {1} (E);
    \draw[->] (Dright) -- node[swap, yshift=5mm]      {0} (1sink);

    \draw[->] (E) -- node[swap, xshift=4mm, yshift=-0.5mm] {1} (1sink);
    \draw[->] (E) -- node {0} (0sink);

    \end{scope}

    \begin{scope}[
    ]

    \pic (leaf1) at (0.5,-2) {leafnode={C}};
    \pic (leaf2) at (3.5,-2) {leafnode={D}};

    \draw[draw=gray!50,thick] (0,0.2) rectangle (3.9,1.3);
    \multiboxnode{i1}{(3,0.7)}[2][6mm]

    \draw[bend right] (leaf1-1) to (i1-1-center);
    \draw[bend left] (leaf2-0) to (i1-1-center);

    \draw[bend right] (leaf1-0) to (i1-2-center);
    \draw[bend left] (leaf2-1) to (i1-2-center);

    \multiboxnode{i2}{(1,0.7)}[2][6mm]

    \draw[bend right] (leaf1-1) to (i2-1-center);
    \draw[bend left] (leaf2-1) to (i2-1-center);

    \draw[bend right] (leaf1-0) to (i2-2-center);
    \draw[bend left] (leaf2-0) to (i2-2-center);

    \pic (leaf3) at (5.4,0.75) {leafnode={E}};

    \multiboxnode{i3}{(4,3)}[3]
    \draw[draw=gray!50,thick] (3,2.5) rectangle (5,3.5);

    \draw[bend right] (i1.north) to (i3-3-center);
    \draw[bend left] (leaf3-0) to (i3-3-center);

    \draw[bend right] (i1.north) to (i3-2-center);
    \draw[bend left] (leaf3-1) to (i3-2-center);

    \draw[bend right] (i2.north) to (i3-1-center);
    \draw[bend left] (leaf3-1) to (i3-1-center);

    \pic (leaf4) at (-4.5,0.75) {leafnode={A}};
    \pic (leaf5) at (-1.5,0.75) {leafnode={B}};

    \multiboxnode{i4}{(-2.75,3)}
    \draw[draw=gray!50,thick] (-3.85,2.5) rectangle (-1.75,3.5);

    \draw[bend right] (leaf4-1) to (i4-1-center);
    \draw[bend left] (leaf5-0) to (i4-1-center);

    \draw[bend right] (leaf4-0) to (i4-2-center);
    \draw[bend left] (leaf5-1) to (i4-2-center);

    \multiboxnode{i5}{(0.5,4.5)}[1]
    \draw[draw=gray!50,thick] (-0.2,4) rectangle (1.2,5);

    \draw[bend right] (i4.north) to (i5-1-center);
    \draw[bend left] (i3.north) to (i5-1-center);

    \end{scope}

    \begin{scope}[
      xshift=10cm,
        yshift=4cm
    ]

    \node[vtree]  (v1) at (0,0) {};
    \node[vtree]  (v2) at (-1.5,-1.5) {};
    \node[vtree]  (v3) at (1.5,-1.5) {};
    \node[sink]   (L4) at (-2.25,-3) {$A$};
    \node[sink]   (L5) at (-0.75,-3) {$B$};
    \node[vtree]  (v6) at (0.75,-3) {};
    \node[sink]   (L7) at (2.25,-3) {$E$};
    \node[sink]   (L8) at (0,-4.5) {$C$};
    \node[sink]   (L9) at (1.5,-4.5) {$D$};

    \node[draw=none] (desc) at (-2,-0.5) {$T$:};

    \draw (v1) -- (v2);
    \draw (v1) -- (v3);
    \draw (v2) -- (L4);
    \draw (v2) -- (L5);
    \draw (v3) -- (v6);
    \draw (v3) -- (L7);
    \draw (v6) -- (L8);
    \draw (v6) -- (L9);
      
    \end{scope}

\end{tikzpicture}     }
    \caption{An OBDD and an incomplete TDD computing the same function.}
\end{figure}

Before we formally define TDD, let us discuss the underlying  intuition,
in particular, the  difference of the definition of TDD  
from the circuit based definitions of DNNF variants.
First, the definition of TDD is \emph{explicitly built in} into a vtree in respects.
To put it differently a DNNF may respect some vtree and thus be structured or may not.
For TDD, the respected vtree is not a property it is part of definition. 

The second convention is that the layers of conjunction and disjunction nodes are alternating. 
Further on, the definition explicitly ties each gate to the respected node of the vtree.
For the conjunction nodes this connection is what we can see for structured DNNFs. For a disjunction node $u$,
we introduce an additional requirement that all the conjunction nodes whose outputs are inputs of
$u$ respect the same node of the vtee. So, $u$ respects the very same node.

Possibly, the most remarkable difference  between circuit-based representations and TDD
is that, in TDD, the conjunction nodes are \emph{not explicitly} present!
In particular, we denote by $N(t)$ the set of all the nodes respecting $t$. 
If $t$ is not a leaf then all the nodes of  $N(t)$ are disjunction nodes. 
If $t$ is a leaf then it corresponds to a variable $x$.
In this case, the nodes of $N(t)$ are labelled by one of $x, \neg x, 0,1$. 

The conjunction nodes are \emph{implicitly} present in a TDD in a form of \emph{edges}.
The underlying intuition is as follows.
Let $t$ be a non-leaf node of the respected vtree and let $t_1$ and $t_2$ be the children of $t$. 
Let us fix an arbitrary order on the children assuming $t_1$ to be the left child and $t_2$ being
the right child. Let $u$ be an (implicitly presented) conjunction node of the considered TDD.
Then the inputs of $u$ are $v \in N(t_1)$ and $w \in N(t_2)$. That is $u$ is represented as an
\emph{edge} between $v$ and $w$. 
Now, suppose that $u \in N(t)$. Then the set of conjunction nodes are the inputs of $u$ 
are nothing else than the set of edges that is a subset of $N(t_1) \times N(t_2)$. 
We denote the set $E(u)$. Note that $E(u)$ can also be viewed as a bipartite graph induced by $E(u)$. 

The cornerstone property of TDD that for distinct $u_1,u_2 \in N(t)$ $E(u_1) \cap E(u_2)=\emptyset$,
or to put it differently, the bipartite graphs associated with $u_1$ and $u_2$ are edge disjoint.
This property is critical for the results presented in this paper. 
The other restriction is  that for each leaf $t$, there is the `no simultaneous satisfaction constraint'
for the elements of $N(t)$. 

Our definition of TDD is different from the original one is that we define TDD as \emph{multisink}.
In particular, for the root node $t$ we do not require that $N(t)$ consists of a single node. 
Due to this, we can conveniently assume that for a node $t$ with children $t_1$ and $t_2$ 
the sets $E(u)$ for $u \in N(t)$ form a \emph{partition} of $N(t_1) \times N(t_2)$.  In other words, every pair of nodes $v \in N(t_1)$
and $w \in N(t_2)$, there is a conjunction node with inputs $v$ and $w$ whose output is an input of some node of $N(t)$. 

\begin{definition} \label{def:tdd} 
A Tree Decision Diagram (TDD) $D$ is a 5-tuple $((T,b),N,E, \out, \lambda)$.
In this 5-tuple, $(T,b)$ is a vtree and $N$ is a function mapping each 
$t \in V(T)$ to a set so that for any two distinct $t_1,t_2 \in V(T)$,
$N(t_1) \cap N(t_2)=\emptyset$. 
This means that for each $u \in \bigcup_{t \in V(T)} V(T)$, we can 
unambigously identify $t(u)$ such that $u \in N(t(u))$.

Next, $E$ is a function with the domain $\bigcup_{t \in V(T) \setminus \operatorname{Leaves}(T)} N(t)$.
Let $u$ be an element of the domain, let $t=t(u)$, and let $t_1,t_2$ be the children of $t$
ordered in an arbitrary but fixed way. Then $E(u) \subseteq N(t_1) \times N(t_2)$.
We require that for each $t \in V(T) \setminus \operatorname{Leaves}(T)$, the sets $\{E(u)\mid u \in N(t)\}$
form a partition of $N(t_1) \times N(t_2)$. 

Finally, $\out$ is a set of \emph{outputs} such that $|\out| = N(r)$ where
$r \in V(T)$ is the root of $T$, and $\lambda$ is a bijective function $\out \to N(r)$ assigning each node in the root
bag a unique label.

The semantic of $D$ is described through the functions $D[u]$ computed by
elements $u \in \bigcup_{t \in V(T)} N(t)$. 
Suppose that $t(u) \in \operatorname{Leaves}(T)$ then $D[u] \subseteq \{x,\neg x,0,1\}$
where $x=b(t(u))$, that is, the variable labelling $t(u)$.
We require that for each $t \in \operatorname{Leaves}(T)$ and for each distinct $u_1,u_2 \in N(t)$,
$D[u_1]$ and $D[u_2]$ are not simultaneously satisfiable.
In other words, if $D[u_1]=1$ then $D[u_2]=0$, and $u_1$ and $u_2$ cannot compute the same 
non-constant function. We note that, if $t(u)$ is a leaf then  $D[u]$ is, in fact, part of syntax
as it can be seen as a label assigned to $u$.
If $t(u)$ is not a leaf then $D[u]$ is defined recursively as follows:
$D[u]=\bigvee_{(v,w) \in E(u)} (D[c] \wedge D[w])$.

We refer to $\sum_{t \in V(T)} |N(T)|$ as the \emph{size} of $D$
and denote it by $|D|$.
We also refer to  $\max_{t \in V(T)} |N(T)|$ as the \emph{width} of $D$.

Finally, for every $\ell \in \out$, we also define $D[\ell] \coloneqq D[\lambda(\ell)]$.
\end{definition}

We also consider a special case of TDDs with only two outputs.

\begin{definition}
  A TDD $D$ is called \emph{Boolean}, if $\out(D) = \{0,1\}$. We refer to $D[1]$ as the function computed by $D$.
\end{definition}

Since our notion of Boolean TDDs compute Boolean functions, they are closer to the notion of (regular) TDDs outlined in
the original paper, which have only one output $\out(D) = \{\ell\}$. Furthermore, the original definition and ours differ
in the sense that we demand \emph{completeness}. This means that for each $t \in V(T)$, the sets 
$E(u)$ for $u \in N(t)$ cover the whole $N(t_1) \times N(t_2)$ where $t_1$ and $t_2$ are the children
of $t$. In the original paper it was shown that every TDD can be turned in to a complete TDD
by going bottom-up the tree and adding an additional `zero' element for
each incomplete node $t$ whose $E$-sets includes the elements of $N(t_1) \times N(t_2)$
not covered by the $E$-sets of the other elements of $N(t)$. We can easily verify that by relabelling
the outputs to $0$ and $1$, our notion of Boolean TDDs therefore corresponds to the original notion of complete TDDs.
We chose making TDD multisets and complete so as to streamline our reasoning 
in the next two sections.

An  example of TDD is demonstrated in the central part of Figure 1.
The TDD respects the vtree as depicted on the right-hand side. 
The TDD on Figure 1 is incomplete. In particular, the children of the 
root  are incomplete and the parent of the node labelled by $C$ and $D$
is complete. 

\begin{definition}
    A \emph{multisink-OBDD} (m-OBDD) is an OBDD $B$ in the usual sense, 
    however instead of having two sinks, $B$ has potentially multiple sinks, each labelled with
    a unique output $\ell$. The set of all labels of an m-OBDD $B$ is denoted by $\out(B)$.
    For each $\ell \in \out(B)$, the function $B[l]$ computed by this output is 
    determined by the set of satisfying assignments which is the set of assignments on all paths
    from the source to $\ell$.

\end{definition}

An ordinary OBDD can be viewed as a special case of m-OBDDs, where we have only two labels $0$ and $1$.
The left-hand part of Figure 1 illustrates an ordinary OBDD computing the same function as the TDD illustrated
in the centre of the figure.

Since the multisink models defined above do not compute Boolean functions in the classical sense, we need to define a notion of equivalence as follows.

\begin{definition}\label{lem:corresponding}
    Let $B, D$ be multisink models (either OBDD or TDD). We say that $B$ \emph{corresponds to} $D$, if there is
    a bijection $f: \out(B) \to \out(D)$ such that for all $\ell \in \out(B)$ it holds that $B[\ell] \equiv D[f(\ell)]$.
\end{definition}

In the case of Boolean TDDs, we can use equivalence as usual.
One useful property of TDDs is that, if we restrict an TDD $D$ respecting a vtree $(T,b)$ to a subtree of $T$, we again obtain another well-defined TDD without having to make any other changes. We formalise this notion as follows.

\begin{definition}[Restricted circuits]
    Let $(T,b)$ be a vtree, $D$ an TDD respecting $(T,b)$, and $s \in V(T)$. Then $D_{s}$ is a circuit
    $D'$ respecting $T_s$ such that for all $v \in V(T_s)$ it holds that $N_{D'}(v) = N_D(v)$, and if $v$ is an internal
    vtree node, then $E_{D'}(g) = E_D(g)$ for all $g \in N(s)$. Finally, $\out(D') = N(s)$ such that each node 
    in $N(s)$ is labelled by itself. 
\end{definition} 
\section{Simulating TDDs with OBDDs}

We now want to prove that OBDDs can quasipolynomially simulate TDDs. Our method is inspired by a similar proof that FBDD
simulates decision-DNNF \cite{BeameFBDDSimulation}, however we take a slightly different approach by constructing the resulting OBDD recursively
over the structure of the vtree. In order to accomplish this, we first need to define a single step of our recursion (Lemma \ref{lem:inductionstep}).
For this, we need, in turn a property of TDDs that can be called \emph{unique satisfaction} that immediately follows
from definition.

\begin{lemma}\label{lem:tddproperties}
 Let $(T,b)$ be a vtree with a root $r$ and children $s_1,s_2$.
 Furthermore, let $D$ be a TDD over $(T,b)$ and let $X$ be the set of variables of $D$. 
 Let $\assign$ be an assignment of $X$ and for each $i \in \{1,2\}$, let 
 $X_i$ be the set of  variables of $T_{s_i}$. 

 For each $i \in \{1,2\}$ suppose that there is $\ell_i \in N(s_i)$ 
 such that $\assign_i=\assign \restrict{X_i}$ \emph{activates} $\ell_i$
 (that is, $D[\ell_i](\assign_i)=1$). 
 Then there is exactly one $\ell \in N(r)$ such that 
 $\assign$ \emph{activates} $\ell$. 
\end{lemma}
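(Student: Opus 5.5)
The plan is to split the claim into existence and uniqueness. Both parts reduce to the partition condition in Definition~\ref{def:tdd} once we know that, at every vtree node, at most one element of the bag is activated by a given assignment. So I would first prove the following auxiliary statement by induction on the height of the subtree: for every $s \in V(T)$ and every assignment $\assign'$ of the variables of $T_s$, \emph{at most one} $u \in N(s)$ satisfies $D[u](\assign')=1$.

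\emph{Base case.} If $s$ is a leaf, this is exactly the ``no simultaneous satisfaction'' requirement in Definition~\ref{def:tdd}.

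\emph{Inductive step.} Let $s$ have children $s'_1,s'_2$, and let $\assign'$ activate some $u \in N(s)$. Since $D[u]=\bigvee_{(v,w)\in E(u)} (D[v]\wedge D[w])$, there is a pair $(v,w)\in E(u)$ with $v$ and $w$ activated by the corresponding restrictions of $\assign'$. By the induction hypothesis, $v$ is the unique activated element of $N(s'_1)$ and $w$ is the unique activated element of $N(s'_2)$.

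Now suppose $u' \in N(s)$ is also activated. By the same reasoning there is $(v',w')\in E(u')$ with $v',w'$ activated. By uniqueness in the children, $(v',w')=(v,w)$. Since the sets $E(\cdot)$ over $N(s)$ form a partition of $N(s'_1)\times N(s'_2)$, the pair $(v,w)$ lies in exactly one block, so $u=u'$.

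\emph{Main claim.} Apply this to the root $r$ with children $s_1,s_2$.
\begin{itemize}
\item \emph{Existence.} By hypothesis $\ell_1\in N(s_1)$ and $\ell_2\in N(s_2)$ are activated. Completeness (the partition covers all of $N(s_1)\times N(s_2)$) gives some $\ell \in N(r)$ with $(\ell_1,\ell_2)\in E(\ell)$. Then $D[\ell](\assign) \ge D[\ell_1](\assign_1)\wedge D[\ell_2](\assign_2) = 1$, using that $D[\ell_i]$ depends only on $X_i$ and $X_1,X_2$ partition $X$.
\item \emph{Uniqueness.} This is the auxiliary statement at $s=r$.
\end{itemize}

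I do not expect a serious obstacle. The only delicate point is that uniqueness at the root genuinely needs the induction down to the leaves, not just the partition property at the root: without uniqueness in the children, two different pairs in two different blocks $E(\ell)$, $E(\ell')$ could both be satisfied. Minor bookkeeping: the children order must be fixed consistently, and the leaf labels $0,1,x,\neg x$ are read as functions of the single leaf variable.
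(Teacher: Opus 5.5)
Your argument is correct and matches what the paper intends. The paper gives no proof and only asserts that unique satisfaction ``immediately follows from definition''. Your induction supplies the detail that claim leaves implicit: the leaf-level no-simultaneous-satisfaction condition propagates upward through the disjointness of the sets $E(u)$, and completeness at the root gives existence. You are also right that uniqueness at $r$ needs this induction down to the leaves, not just the partition property at the root.
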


\begin{lemma}\label{lem:inductionstep}
   Let $(T,b)$ be a vtree with a root $r$ and children $s_1,s_2$. Furthermore, let $D$ be a TDD over $(T,b)$.

   If there are m-OBDDs $B_i$ corresponding to the $D_{s_i}$, then there is an
   m-OBDD $B$ corresponding to $D$ of size $|B_1| + w \cdot |B_2|$, where $w$ is the width of $D$.
\end{lemma}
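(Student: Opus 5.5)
The construction puts $B_1$ first and then hangs a separate copy of $B_2$ below each sink of $B_1$. Since $B_1$ corresponds to $D_{s_1}$, there is a bijection $f_1:\out(B_1)\to N(s_1)$ with $B_1[\ell]\equiv D[f_1(\ell)]$. We may take $B_1$ over an order of $X_1$ and $B_2$ over an order of $X_2$, so the concatenated order ($X_1$ then $X_2$) is a linear order on $X$. For each sink $\ell$ of $B_1$ we make a fresh copy $B_2^{(\ell)}$ of $B_2$ and redirect every edge into $\ell$ to the source of $B_2^{(\ell)}$. There are $|N(s_1)|\le w$ sinks, so the size is at most $|B_1|+w\cdot|B_2|$. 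The $B_1$ part is shared; only the sinks disappear, which only helps the bound.

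Next we label the sinks. A sink of the combined diagram is a pair $(\ell,\ell')$: $\ell$ a sink of $B_1$, $\ell'$ a sink of $B_2$ inside $B_2^{(\ell)}$. Set $v=f_1(\ell)\in N(s_1)$ and $v'=f_2(\ell')\in N(s_2)$. By the partition condition of Definition~\ref{def:tdd}, there is exactly one $u\in N(r)$ with $(v,v')\in E(u)$. We merge all sinks $(\ell,\ell')$ mapped to the same $u$ into one sink, and label it by the output $\lambda^{-1}(u)$. This merging keeps the OBDD well-formed, since every sink is reached only after reading all of $X$ in the order. Any $u\in N(r)$ with $E(u)=\emptyset$ computes the constant $0$ function; for it we add an isolated unreachable sink, so the output bijection is total. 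That costs at most $|N(r)|\le w$ extra nodes, which we either absorb (the bound counts $B_1$'s removed sinks) or allow as an additive slack. This is the point I expect to need the most care.

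For correctness we need, for every assignment $\assign$ of $X$, that the path in $B$ ends in the sink labelled $u$ exactly when $D[u](\assign)=1$. Every assignment follows exactly one path in an m-OBDD, so $\assign_1=\assign\restrict{X_1}$ reaches a unique sink $\ell$ of $B_1$. Hence the sink sets of an m-OBDD partition the assignments. Because $B_1$ corresponds to $D_{s_1}$, the node $f_1(\ell)$ is activated by $\assign_1$; by the disjointness of the multisink semantics it is the unique such node. The same holds for $\assign_2$ in $B_2^{(\ell)}$, which gives a unique activated $v'$. By definition $D[u](\assign)=\bigvee_{(v,v')\in E(u)}D[v](\assign_1)\wedge D[v'](\assign_2)$. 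This is true exactly when the unique activated pair lies in $E(u)$, which is exactly the case where the path reaches the sink labelled $u$. Lemma~\ref{lem:tddproperties} is the statement that exactly one such $u$ exists.

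The main subtlety is the leaf-level exclusivity: children's outputs must be pairwise disjoint and must cover all assignments. This is exactly what the correspondence to an m-OBDD guarantees. If the bijection alone were not enough, I would note this explicitly as an inductive invariant: in a TDD the functions $D[u]$, $u\in N(t)$, partition the assignments. That invariant follows by induction from the partition property of $E$, together with completeness of the leaf labels.
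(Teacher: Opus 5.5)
Your proposal is correct and follows essentially the same route as the paper. Like the paper, you stack $B_1$ above one fresh copy of $B_2$ per sink of $B_1$, then merge the resulting sinks $(\ell,\ell')$ according to the unique block $E(u)$ that contains $(f_1(\ell),f_2(\ell'))$; correctness follows from the sink functions of each $B_i$ partitioning the assignments of $X_i$. Your count of $|\out(B_1)|=|N(s_1)|\le w$ copies is the right one (the paper's write-up says $|\out(B_2)|$ copies, a harmless slip), and your extra handling of an empty $E(u)$ does no harm, though the case does not arise under the usual convention that partition blocks are nonempty.
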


\begin{proof}
    As each $B_i$ corresponds to $D_{s_i}$, we let $f_i: \out(B_i) \to \out(D_{s_i})$ be the corresponding bijection as in 
    Definition \ref*{lem:corresponding}. Intuitively, we want to simply ``stack'' $B_1$ on top of $B_2$. We can accomplish 
    this by replacing each sink of $B_1$ labelled with $\ell^i$ with a copy of $B_2$ with sinks labelled with
    $\ell_1^i, \dots \ell_{L}^i$, where $L \coloneqq |\out(B_2)|$. Let $B'$ be the resulting m-OBDD.

    Let $X_1$ be the variables of $B_1$, and $X_2$ be the variables of $B_2$. We observe that an assignment $\assign$
    of $X_1 \cup X_2$ is a model of  $B'[\ell_j^i]$ exactly if $\assign \restrict{X_1}$ is a model of $B_1[\ell^i]$,
    and $\assign \restrict{X_2}$ is a model of $B_2[\ell_j]$. Since each $B_i$ corresponds to $D_{s_i}$, it follows
    that $\assign \models D_{s_1}[f_1(\ell^i)]$, and $  \assign \models D_{s_2}[f_2(\ell_j)]$. 
    By the unique satisfaction property, there must
    also be exactly one $g \in N(r)$ such that $\assign$ activates $g$. This means that we can take all $\ell_j^i$ such that the corresponding
    pair $(f_1(\ell^i), f_2(\ell_j))$ belong to the same $E(g)$, and merge them into a single sink labelled with $g$. We let
    $B$ be the resulting m-OBDD.

    It clearly follows that every assignment $\assign$ of $B$ reaches a sink labelled with $g$ exactly if $\assign$
    activates $g$ in $D$. Therefore, $B$ corresponds to $D$. For the size bound, we simply observe that $|\out(B_2)| \leq w$,
    where $w$ is the width of $D$, and that this construction only requires one copy of $B_1$, and $|\out(B_2)|$ copies of $B_2$.
\end{proof}

Now that we have completed a single step of our transformation, we only need to show that we can apply Lemma \ref*{lem:inductionstep} in a way that maintains our desired upper bound.

\begin{theorem}\label{th:simulation}
    Let $(T,b)$ be a vtree, and $D$ be a TDD over $(T,b)$. Then there is a linear order $\leq$, as well as
    an m-OBDD $B$ respecting $\leq$ and corresponding to $D$ of size $\bigO\left( (\sqrt{n} \cdot w(D))^{\log(n)+2} \right)$,
    where $w(D)$ is the width of $D$.
\end{theorem}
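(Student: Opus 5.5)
We argue by induction over the vtree, applying Lemma~\ref{lem:inductionstep} at every internal node. The key freedom is that the children $s_1,s_2$ of a node play asymmetric roles there: the subdiagram placed on top is copied once, and the one placed below is copied up to $w$ times. Since the order of children in the vtree is arbitrary, at each internal node $t$ we choose the \emph{larger} subtree (more leaves) as $s_1$, the one copied once, and the \emph{smaller} one as $s_2$. The linear order $\leq$ is then the left-to-right leaf order of the reoriented vtree, which is exactly the order in which the stacked m-OBDDs read their variables. For a leaf $t$ with variable $x$, the m-OBDD for $D_t$ is a single decision node on $x$ with at most $|N(t)|\le w$ sinks. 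Each $u\in N(t)$ gets one sink, and assignments activating no element of $N(t)$ are sent to a dummy sink that we later drop or merge; this is harmless because the elements of $N(t)$ are pairwise not simultaneously satisfiable. So the base case has constant size.

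Let $S(m)$ be the maximum size of the m-OBDD produced for a subtree with $m$ leaves. Lemma~\ref{lem:inductionstep} gives
\[
S(m) \le \max_{1\le m_2\le m/2}\bigl(S(m-m_2) + w\cdot S(m_2)\bigr) + \bigO(1),
\]
with $S(1)=\bigO(1)$. The main obstacle is solving this recurrence sharply enough to get $(\sqrt{n}\, w)^{\log n + 2}$.

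Unfolding along the heavy path handles this. Follow the larger child repeatedly from the root down to a leaf. The light subtrees hanging off this path have sizes $m_2^{(1)}, m_2^{(2)},\dots$, each at most half of the current subtree, and they sum to at most $m$. Hence
\[
S(m) \le \bigO(m) + w\sum_j S\bigl(m_2^{(j)}\bigr).
\]
We then prove by induction $S(m)\le c\, m\,(w\cdot \alpha)^{\log m}$ for a suitable $\alpha$. With this hypothesis the sum is at most $w\sum_j c\, m_2^{(j)} (w\alpha)^{\log(m/2)} \le c\, m\,(w\alpha)^{\log m}\cdot \frac{w}{w\alpha}$. The light subtrees have depth at most $\log m$ in the light-edge sense, and the factor $w$ is lost at most once per light edge, so the number of copies of any leaf's decision node is at most $w^{\log n}$. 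This already gives a bound of the form $n\cdot w^{\log n}$, which is stronger than $(\sqrt n\, w)^{\log n+2}$.

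To be safe and to match the stated form, we count size per vtree leaf. The decision node of leaf $x$ appears in at most $w^{h(x)}$ copies, where $h(x)\le \log n$ is the number of light edges on the root-to-$x$ path. Summing over $n$ leaves, and adding the sink and merge overhead of at most $w$ per copy, gives a total of $\bigO(n\cdot w^{\log n+1})$. This is $\bigO\bigl((\sqrt n\, w)^{\log n+2}\bigr)$ since $n\le \sqrt n^{\,\log n+2}$ for $n\ge 2$.

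The step needing the most care is confirming that Lemma~\ref{lem:inductionstep}'s size accounting ($|B_1|+w|B_2|$, with $w$ the global width) composes. Copying $B_2$ $w$ times also copies all nested copies inside it, which multiplies their counts. Getting exactly one factor $w$ per light edge on the path, and none per heavy edge, is what yields the $\log n$ exponent.
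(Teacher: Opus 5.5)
Your argument is correct. It uses the same construction as the paper: the heavier child is stacked on top and embedded once, the lighter child is copied up to $w$ times, and the order is the left-to-right leaf order of the reoriented vtree. Your size analysis, however, is genuinely different and sharper.

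The paper turns Lemma~\ref{lem:inductionstep} into the recurrence $M[k]\le M[k-1]+w\cdot M[\lfloor k/2\rfloor]$. This replaces the heavy child's $k-k_2$ leaves by $k-1$. The paper then unrolls it into $C[k]=wk\cdot C[k/2]+C[1]$ and estimates the product part and the leftover terms separately. Already for $w=1$, the equality version of that recurrence is the binary-partition recurrence, which grows like $k^{\Theta(\log k)}$. That loss is where the $\sqrt{n}^{\,\log n}$ factor in the statement comes from.

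Your per-leaf count avoids solving any recurrence. The decision node of $x$ is multiplied by at most $w$ once per light edge on its root path, and there are at most $\log n$ such edges. This gives $n\cdot w^{\log n}+\bigO(w)=n^{1+\log w}+\bigO(w)$, which is polynomial in $n$ for bounded width and implies the stated bound. Equivalently, your exact recurrence with $S(m-m_2)$ closes by a one-line induction on $S(m)\le c\,m\,w^{\log m}$ (modulo additive constants), since $m_2\le m/2$ gives $w\cdot w^{\log m_2}\le w^{\log m}$. The paragraph with the unspecified $\alpha$ is therefore unnecessary and can be dropped.

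One small correction on the base case. Pairwise unsatisfiability of $N(t)$ only ensures that each assignment reaches at most one sink. Getting rid of the dummy sink, so that Definition~\ref{lem:corresponding} yields a bijection on outputs, requires either partial transitions in m-OBDDs or leaf bags that cover all assignments. The paper's ``trivially define $B_p$'' glosses over the same point.
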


\begin{proof}
    The idea is to iteratively apply Lemma \ref*{lem:inductionstep} in order to obtain the desired m-OBDD. The correctness of the
    construction immediately follows, however we still have to make sure that the size bound is correct. Towards this goal,
    we want to add some more structure to our vtree.

    For every $s \in V(T)$, we let $|s|$ be the number of leaves in $T_s$. Furthermore, let $s \in V(T)$ be some internal node
    of $T$ with children $t_1, t_2$. We now label $t_i$ as the \emph{left child} of $s$, if $|t_i| > |t_{3-i}|$,
    and label $t_{3-i}$ as the \emph{right child} of $s$. If $|t_1| = |t_2|$, we assign the left and right child arbitrarily.
    The linear order $\leq$ is then defined recursively with $x_1 < x_2$ for all variables $x_1$ mentioned in the left child, and $x_2$ in the right child of $s$.

    We now apply Lemma \ref*{lem:inductionstep} bottom-up over the vtree as follows: For every leaf $p$ of $T$, we can trivially
    define an m-OBDD $B_p$ corresponding to $D_{p}$. For every internal node $s$ of $T$ with children $t_1$,
    and $t_2$, if we have already constructed the m-OBDDs $B_{t_i}$ corresponding to both $D_{t_i}$, then we apply Lemma \ref*{lem:inductionstep}
    with $t_1$ as the left child of $s$. This means that we only ever create copies of the m-OBDD $B_{t_2}$.

    Let's take a closer look at this construction, and consider one inductive step of our construction for 
    some internal node $s$ with children $t_1, t_2$. Let $k = |s|$, and for $i \in \{1,2\}$,
    let $k_i = |t_i|$. Since the variables mentioned by either $t_i$ are disjoint, and $k_1 \geq k_2$, we observe that 
    $1 \leq k_2 \leq \lfloor \frac{1}{2} k \rfloor$ as well as $k_1 \leq k-1$.

    We now want to study the size of the smallest m-OBDD $B_s$ coresponding to $D_{s}$ relative to $k$, as well as $k_1$ and $k_2$.
    In particular, we let $M[k]$ be the maximum size among the constructed m-OBDDs $B_p$ corresponding to $D_{p}$ for any 
    $|p| \leq k$. This also means that $M[n]$ is an upper bound of the size of an m-OBDD corresponding to $D$. By Lemma \ref*{lem:inductionstep}, 
    the size of each of our $B_s$ is $|B_s| \leq |B_{k_1}| + w \cdot |B_{k_2}|$, where $w$ is the width of $D$. By the definition of 
    $M[k]$ and our previous observation about $k_1$ and $k_2 $ it follows that $B_{k_1} \leq M[k-1]$, and 
    $B_{k_2} \leq M[ \left\lfloor \frac{k}{2} \right\rfloor]$. We conclude that
    
    \[ M[k] \leq M[k-1] + w \cdot M[ \left\lfloor \frac{k}{2} \right\rfloor  ]. \]

    Let $r$ be the root of the vtree $(T,b)$. It clearly holds that $D_{r} = D$. Furthermore it holds by definition of $M[n]$ 
    that $|B_r| \leq M[n]$, where $B_r$ is the constructed m-OBDD corresponding to $D_{r}$. Therefore, it is 
    sufficient to show that $M[n]$ satisfies the desired upper bound in order to prove the theorem.

    We simplify this recurrence by iteratively expanding on the linear part. This gives us $M[k] \leq 2w \cdot \sum_{i=1}^{\frac{k}{2}}M[i] + M[1]$.
    Because $M[i]$ is a monotone function, we have that $M[i] \leq M[\frac{k}{2}]$ for each $i \leq \frac{k}{2}$. If we additionally assume for simplicity that $k = 2^\ell$ for some $\ell \in \N$, we get $M[k] \leq C[k]$, where $C[k]$ is a reccurence defined as
    \[ C[k] \coloneqq wk \cdot C[\frac{k}{2}] + C[1].   \]
    
    We now let $N[k] \coloneqq wk \cdot N[\frac{k}{2}]$, and $F(k) \coloneqq C[k] - N[k]$ be the sum of all remaining
    $C[1]$. It clearly holds that $C[k] = N[k] + F(k)$, therefore, we only need to show that both of these functions satisfy our upper bound. 
    We start by analysing $N[k]$:
    \begin{equation*}
        \begin{aligned}
            N[k] &\leq wk \cdot N[\frac{k}{2}] = \prod_{i =0}^{\ell} wk \cdot 2^{-i} = (wk)^{\ell+1} \cdot \prod_{i =0}^{\ell} 2^{-i}  \\
            &\leq (wk)^{\ell+1} \cdot 2^{- \frac{\ell(\ell+1)}{2}} = (wk)^{\ell+1} \cdot k^{- \frac{\ell+1}{2}} = w^{\ell+1} \cdot k^{\frac{\ell+1}{2}}  \\
            &= (w \sqrt{k})^{\ell+1} \in \bigO\left( \left( w \sqrt{k} \right)^{\ell +2}\right).
        \end{aligned}
    \end{equation*} 
    We recall that $\ell = \log k$, therefore $N[k]$ satisfies the desired upper bound.
    It remains to show that $F(k)$ is in $\bigO\left( (w \sqrt{k})^{\ell +2} \right)$ as well.
    Towards this goal, we want to expand $C[k]$ on each instance of $C[\frac{k}{2}]$, and then split $F(k)$ into several $F_i(k)$ with each representing one of our expansion steps. We observe that during the $i$th expansion the recurrence $C[k]$ takes the form of

    \[ C[k] = \prod_{j = 0}^{i} \left( wk \cdot 2^{-j} \right) C\left[  \frac{k}{2^{i+1}} \right] + \sum_{j = 0}^i F_i(k), \]
    
    For example, for $i = 0$, this expansion is just the regular definition of $C[k]$ with $F_0(k) = C[1]$, and for $i=1$, we have $C[k] = wk \cdot w\frac{k}{2} C[\frac{k}{4}] + C[1] + wk \cdot C[1] $ with $F_1(k) = wk \cdot C[1] = wk \cdot 2^{-0}C[1]$, which is just one expansion on $C[\frac{k}{2}]$. In general we observe that $F(k) = C[1] + \sum_{i = 1}^{\ell} F_i(k)$ with
    
    \[ F_i(k) = C[1] \cdot \prod_{j =0}^{i-1} wk  \cdot 2^{-j}. \]
    We can now simplify this expression to
    \begin{equation*}
        \begin{aligned}
            F_i(k) &= C[1] \cdot (wk)^{i} \prod_{j = 0}^{i-1} 2^{-j} = C[1] \cdot (wk)^i \cdot 2^{-\frac{(i-1)(i)}{2}} = C[1] \cdot \left( wk2^{-\frac{i-1}{2}} \right)^i.
        \end{aligned}
    \end{equation*}
    We observe that $F_i(k) < F_{i+1}(k)$ for all $i < 2\ell$, as $2^{\frac{i-1}{2}} < 2^\ell = k$, and for $i = \ell +1$ it holds that $2^{\frac{i-1}{2}} = 2^{\frac{\ell}{2}} = \sqrt{k}$.
    Therefore, for all $i \leq \ell$ it holds that $F_i(k) \leq F_{\ell +1}(k) =  C[1] \left( w\sqrt{k} \right)^{\ell +1}$. We finally conclude

    \begin{equation*}
        \begin{aligned}
        F(k) &= C[1] + \sum_{i =1}^{\ell} F_i(k) \leq C[1] + \sum_{i =1}^{\ell} C[1] \cdot \left( w\sqrt{k}  \right)^{\ell +1} \\
        &= C[1] + \ell \cdot C[1] \left( w\sqrt{k}  \right)^{\ell +1} \\
        &\leq C[1] + C[1] \left( w\sqrt{k}  \right)^{\ell+2} \in \bigO\left( \left( w \sqrt{k}  \right)^{\ell +2} \right)
        \end{aligned}
    \end{equation*}
    Since both $N[k]$ and $F(k)$ satisfy the desired upper bound, our claim follows.
\end{proof}

\section{Restructuring TDDs}

Another interesting problem for OBDDs is the so-called \emph{global rebuilding problem}. The goal of this problem is to take any OBDD $B$ respecting a linear order $\leq$, and build an equivalent reduced OBDD $B'$ respecting a new linear order $\leq'$. It was shown in \cite{WegenerBranchingPaths} that this problem can be solved in time $\bigO(|B||B'| \cdot \log(|B'|)$. Since OBDDs support polynomial time equivalence checking over the same linear order, it follows that they also support polynomial time equivalence checking over \emph{different} linear orders.

In the context of TDDs we are instead interested in arbitrarily changing the underlying vtree, therefore we are going to refer to this problem as the \emph{restructuring problem}.
The goal of this section is to show that TDDs admit polynomial-time restructuring. And as a consequence we can also show that TDDs not only support polynomial-time equivalence checking over different vtrees, but we can also efficiently check equivalence between TDDs and much more powerful data structures such as  deterministic structured decomposable negation normal form (d-SDNNF) \cite{darwichedsdnnf}.

We start by introducing the notion of \emph{factors}, which play a key role in minimising TDDs. Factors were originally introduced in \cite{DBLP:conf/pods/BovaS17}, however we use a slightly different equivalent definition.

\begin{definition}
    Let $f: \{0,1\}^X \to \{0,1\}$ be a Boolean function, and $Y \subseteq X$. Two partial assignments 
    $\assign_1, \assign_2: Y \to \{0,1\}$ are called \emph{equivalent}, denoted by $\assign_1 \equiv_Y^f \assign_2$, 
    if $f \restrict{\assign_1} \equiv f \restrict{\assign_2}$.
    A \emph{factor} of $f$ (with respect to $Y$) is an equivalence class of $\equiv_Y^f$.
\end{definition}

We note that by definition factors always inclusion-wise maximal. It was shown in \cite{tdd} that TDDs can be characterised 
using factors in the following sense:

\begin{definition}
  Let $D$ be a Boolean TDD over a vtree $(T,b)$. $D$ is called \emph{reduced}, if for every $t \in V(T)$
  and $u \in N(t)$ it holds that $D[u]$ is a unique factor of $D$ with respect to $X_{t}$, where
  $X_t$ is the set of variables mentioned in $T_t$.
\end{definition}

It was further shown that for every node $u$ of $D$ all assignments in $D[u]$ are equivalent, and therefore every TDD
(in the sense of \cite{tdd}) can be reduced in polynomial-time by simply merging equivalent nodes, and that this reduced
form is always canonical. We further observe that, if we turn a reduced TDD into a complete TDD, as outlined in Section \ref*{sec:prelim},
then the resulting complete TDD is also minimal and canonical among all complete TDDs respecting the vtree $(T,b)$.
We recall that our notion of Boolean TDDs corresponds to the original notion of complete TDDs. We therefore conclude:

\begin{lemma} \label{lem:reducedtdd}
    Let $D$ be a Boolean TDD. There is a polynomial-time algorithm that computes a reduced Boolean TDD $D'$
    such that $D'$ is equivalent to $D$. Furthermore, this reduced TDD is minimal and canonical. 
\end{lemma}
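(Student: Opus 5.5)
We would prove Lemma~\ref{lem:reducedtdd} by a bottom-up merging procedure over the vtree, followed by a canonicity argument based on factors. First, we show that for every node $u \in N(t)$ of a Boolean TDD $D$, all assignments of $X_t$ that activate $u$ are $\equiv^{f}_{X_t}$-equivalent, where $f = D[1]$. We argue top-down: an assignment $\assign$ of $X$ with $\assign\restrict{X_t}$ activating $u$ activates exactly one node at every ancestor of $t$, by Lemma~\ref{lem:tddproperties} and completeness. Which node it activates at the parent of $t$ depends only on $u$ and on the node activated at the sibling of $t$. Hence whether $\assign$ reaches the output $1$ depends only on $u$ and on $\assign\restrict{X\setminus X_t}$. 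Therefore $f\restrict{\assign_1} \equiv f\restrict{\assign_2}$ for any two assignments $\assign_1,\assign_2$ of $X_t$ that activate $u$. Since the nodes of $N(t)$ partition the assignments of $X_t$ (again by unique satisfaction and completeness), each $D[u]$ is contained in a single factor, and every factor is a union of some of the sets $D[u]$.

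Next comes the algorithm. Processing the vtree top-down, we merge two nodes $u,u' \in N(t)$ whenever they lie in the same factor. Since top-down processing needs equivalence information from the root, we organise it as follows. At the root, the classes are given by the outputs $0$ and $1$, so we merge all root nodes into these two. For a child $t_1$ of $t$ with sibling $t_2$, two nodes $v,v' \in N(t_1)$ are equivalent iff for every $w \in N(t_2)$ the pairs $(v,w)$ and $(v',w)$ lie in $E$-sets of nodes of $N(t)$ that have already been identified as equivalent. This is a partition-refinement test computable in time $O(|N(t_1)|^2\,|N(t_2)|)$. The test is correct because the only context that $v$ sees is the node of $N(t_2)$ activated by the rest of the assignment, together with the already computed equivalence on $N(t)$. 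Some nodes may be unsatisfiable; we remove them, or handle them by allowing empty activation sets, and restore completeness with a zero node, as in Section~\ref{sec:prelim}.

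After merging, we replace the $E$-sets by their unions, which still partition $N(t_1)\times N(t_2)$. The function computed at each output is unchanged, since merged nodes induce the same residual behaviour. The resulting TDD $D'$ is then reduced. By the first step its nodes in $N(t)$ are in bijection with the factors of $f$ with respect to $X_t$. Consequently any equivalent Boolean TDD over $(T,b)$ has at least that many nodes at $t$, which gives minimality. Canonicity follows because the structure of $D'$ is determined entirely by the factors: node $u$ is the factor, and $(v,w) \in E(u)$ iff the combination of the factors $v$ and $w$ lies in the factor $u$. Any two reduced Boolean TDDs for $f$ over $(T,b)$ are therefore isomorphic.

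The main obstacle is proving that the local refinement test exactly captures factor equivalence, i.e.\ that ``same class after top-down refinement'' coincides with $\equiv^f_{X_t}$, including the care needed for unsatisfiable nodes. We would prove this by induction on the depth of $t$, using the partition property of activations. Alternatively, we would simply invoke the reduction result of~\cite{tdd} and only verify the completion step.
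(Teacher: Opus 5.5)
Your proposal is correct in substance, but it takes a different and more self-contained route than the paper. The paper gives essentially no argument of its own. It imports three facts from \cite{tdd}: all assignments activating a node are equivalent; (incomplete, single-output) TDDs can be reduced in polynomial time by merging equivalent nodes; and the reduced form is canonical. It then only remarks that completing a reduced TDD with zero nodes keeps it minimal and canonical among complete TDDs. Your closing ``alternatively'' sentence is exactly that route.

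You instead reduce the complete Boolean TDD directly. You reprove that activation sets lie inside single factors. You make explicit \emph{how} equivalence is decided, via a top-down partition refinement: $v\sim v'$ iff the nodes containing $(v,w)$ and $(v',w)$ are equivalent for every $w$ in the sibling bag. You then derive minimality and canonicity from the node--factor bijection, with $(v,w)\in E(u)$ iff the product of the factors of $v$ and $w$lies in the factor of $u$.

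The refinement test is sound in both directions once unsatisfiable nodes are pruned bottom-up; otherwise a pair $(v,w)$ with $w$ unsatisfiable can block a legitimate merge, as you note. The merged $E$-sets do remain a partition: $v_1\sim v_2$ and $w_1\sim w_2$ force $(v_1,w_1)$ and $(v_2,w_2)$ into equivalent nodes by applying the criterion once at each child.

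One point needs more care. The claim that the nodes of $N(t)$ partition all assignments of $X_t$ does not follow from completeness of the $E$-sets alone. It also requires every leaf bag to cover both values of its variable. Your zero-node completion is what supplies this, and you need it both for the lower bound in the minimality count and for each $D'[u]$ to be a whole factor rather than part of the zero factor.

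What your route buys is an explicit polynomial algorithm and a proof independent of the different (incomplete, single-sink) setting of \cite{tdd}. The paper's route buys brevity, at the cost of leaving the transfer from incomplete to complete TDDs essentially unargued.
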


For convenience, we are only going to consider Boolean TDDs throughout this section. However, it can be shown that the algorithm
in Lemma \ref*{lem:reducedtdd} also works on regular TDDs, and always outputs a minimal and canonical form.

We are now ready to prove our main theorem. We start by proving that TDDs support polynomial-time restructuring.

\begin{theorem} \label{th:restructuring}
    Given a Boolean TDD $D$ and a target vtree $(T,b)$, there is an algorithm that can construct a reduced TDD $D_{(T,b)}$ over $(T,b)$ such that
    $D_{(T,b)} \equiv D$ in time $|D_{(T,b)}|^4 \cdot |D|^{\bigO(1)}$.
\end{theorem}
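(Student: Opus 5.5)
We build $D_{(T,b)}$ bottom-up along the target vtree $(T,b)$, maintaining for every node $t$ an explicit representation of the factors of $f \coloneqq D[1]$ with respect to $X_t$. The guiding fact is that in the reduced TDD over $(T,b)$, the nodes of $N(t)$ are exactly these factors; this is the characterisation behind Lemma~\ref{lem:reducedtdd}. So it suffices to compute, for each $t$, the set of factors, and for each internal $t$ with children $t_1,t_2$ the partition of $N(t_1)\times N(t_2)$ into the sets $E(u)$.

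Each factor $u$ at $t$ will be stored by a \emph{witness}: one partial assignment $\assign_u : X_t \to \{0,1\}$ lying in that factor. Two witnesses $\assign,\assign'$ represent the same factor iff $f\restrict{\assign} \equiv f\restrict{\assign'}$. To decide this we use the structure of $D$ over its own vtree $(T',b')$. Conditioning $D$ on a partial assignment is easy: replace each leaf label by its value under the assignment. The result is again a TDD over $(T',b')$, now with constant leaves for the assigned variables, and it can be reduced via Lemma~\ref{lem:reducedtdd}. Since reduced TDDs are canonical, equivalence of $f\restrict{\assign}$ and $f\restrict{\assign'}$ can be tested in time $|D|^{\bigO(1)}$, either by comparing the two reduced forms or by a simultaneous bottom-up comparison. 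Each such test costs $|D|^{\bigO(1)}$.

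At a leaf $t$ with variable $x$, the candidate witnesses are $x=0$ and $x=1$. We either merge them or keep both, labelling the nodes $x,\neg x$ or a constant as appropriate. At an internal node $t$ we have witness sets $W_1 \subseteq \{0,1\}^{X_{t_1}}$ and $W_2 \subseteq \{0,1\}^{X_{t_2}}$, one witness per factor. For every pair $(\assign_1,\assign_2)\in W_1\times W_2$ we form $\assign_1\cup\assign_2$. The key point is that the factors at $t$ are exactly the classes of these combined assignments. Indeed, if $\assign_1 \equiv \assign_1'$ on $X_{t_1}$, then $\assign_1\cup\assign_2$ and $\assign_1'\cup\assign_2$ give the same restriction of $f$, and every assignment of $X_t$ is a union of assignments to $X_{t_1}$ and $X_{t_2}$. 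We group the $|N(t_1)||N(t_2)|$ pairs into classes by pairwise equivalence tests against one representative per class found so far. This uses at most $|N(t_1)||N(t_2)|\cdot|N(t)|$ tests, which is at most $|D_{(T,b)}|^3$ per node, and summing over nodes stays within $|D_{(T,b)}|^4$. The classes define the nodes of $N(t)$, the pairs in each class define its $E(u)$, and a representative serves as the new witness. At the root, every class is a constant function, $0$ or $1$, and we relabel the root nodes as the outputs $0$ and $1$, accordingly. By construction each node computes a distinct factor, so the result is reduced and equivalent to $D$.

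The main obstacle is the equivalence test of two conditioned TDDs in polynomial time. We have to make sure that conditioning on a partial assignment and then reducing yields a canonical object over $(T',b')$ that can be compared syntactically. Partial assignments leave some variables constant, so the reduced form must handle constant leaves consistently. We would settle this by applying Lemma~\ref{lem:reducedtdd} over the full variable set $X$ after the substitution, and then checking that the canonical reduced forms are isomorphic. Isomorphism of canonical forms is easy to check top-down, since the outputs fix the labelling.
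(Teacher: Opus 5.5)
Your proposal is correct and follows essentially the paper's proof: build bottom-up over the target vtree with one representative model per node, form all pairs in $N(t_1)\times N(t_2)$ with unioned representatives, and merge pairs whose restrictions of $D$ are equivalent, decided by comparing canonical reduced TDDs over $D$'s own vtree. Two small differences: comparing each pair against one representative per class gives a slightly tighter count than the paper's $|N'(s)|^2$ pairwise tests, and the final isomorphism check of canonical forms should be done bottom-up rather than top-down from the outputs, since leaf labels fix the leaf nodes and each internal node is then determined by its set $E(u)$.
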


\begin{proof}
    The idea is that we construct $D_{(T,b)}$ bottom-up over the structure of the vtree, and at each step use the original $D$ in order
    to check whether any two nodes of $D_{(T,b)}$ belong to the same factor. For this, we store for each vertex $v \in V(T)$
    the set $N(v)$ as well as a representative model $\assign_g$ for every gate $g$ in $N(v)$. The process works as follows:

    For every leaf $\ell$ of $(T,b)$, check whether $D \restrict{b(\ell)=0} \equiv D \restrict{b(\ell) = 1}$. 
    If yes, then
    $N(\ell)$ is just a single node labelled with $\top$ and we pick an arbitrary assignment as the representative model, 
    otherwise $N(\ell) \coloneqq \{ g_1 \coloneqq \{b(\ell) \}, g_0 \coloneqq \{\neg b(\ell)\} \}$, and we let $\assign_{g_i}$ 
    be the unique model of the gate $g_i$.

    Let $s$ be an internal node of $(T,b)$ with children $t_1,t_2$ such that we have already derived $N(t_1)$ and $N(t_2)$.
    We construct $N(s)$ as follows:
    
    Intuitively, we first add one node $g_{ij}$ for each $(g_i,g_j) \in N(t_1) \times N(t_2)$, and then merge all
    pairs of nodes, which belong to the same factor. We start by defining $N'(s) \coloneqq \{g_{ij} \mid (i,j) \in N(t_1) \times N(t_2)\}$
    with $E(g_{ij}) = \{ (i,j) \}$ and representative model $\assign_{g_{ij}} \coloneqq \assign_i \cup \assign_j$ for all pairs $(i,j)$.
    Next we want to merge all nodes belonging to one factor. Since all models of a node $g$ are equivalent,
    we can check whether two nodes belong to the same factor by only comparing their representative models.
    For all $g,g' \in N'(s)$ we define the equivalence relation $g \equiv^D g'$, exactly if the corresponding $D \restrict{\assign_g} 
    \equiv D\restrict{\assign_{g'}}$. Let $\mathcal{C}_D$ be the equivalence classes of $\equiv^D$.

    We observe that two nodes $g$ and $g'$ belong to the same equivalence class of $\equiv^D$ if, and only if, they belong to the same factor,
    and thus can be merged into a single TDD-node. Therefore we define $N(s)$ as the set of all $g_C$ for every 
    equivalence class $C \in \mathcal{C}_D$ with $E(g_C) \coloneqq \bigcup_{g \in C} E(g)$, and as a representative model 
    $\assign_C$ we pick the representative model $\assign_g$ of an arbitrary $g \in C$. Finally, at the root $r \in V(T)$,
    We label every $g \in N(r)$ with $i$, if $\assign_{g}$ is a model of $D[i]$, where $i \in \{0,1\}$ is a possible 
    output of $D$.

    For the correctness of the algorithm, it is sufficient to recognise that by induction every $N(s)$ describes the factors 
    of $D$ with respect to $X_s$. Therefore, if $s$ has children $t_1,t_2$, then all models $t_1$ and $t_2$ belong to the same factor.
    It follows that the models of every $(g_1,g_2)$ with $g_1 \in N(t_1)$ and $g_2 \in N(t_2)$ belong to the same factor. In the last
    step we only merge pairs in $N'(s)$, if the representative models belong to the same factors, it follows immediately that 
    $N(s)$ describes a set of factors as well.

    For the size bound, we essentially minimise $D_{(T,b)}$ ad hoc, therefore the resulting TDD will always be reduced. The size of any 
    intermediate circuit is therefore $|D_{(T,b)}| + w^2$, where $w$ is the width of $D_{(T,b)}$, as every $N(t_1)$ and $N(t_2)$
    has size at most $w$, and the largest intermediate addition is $|N'(s)| = |N(t_1)| \cdot |N(t_2)| \leq w^2$.

    Finally, the equivalence classes of $\equiv^D$ can be determined in time $w^4 \cdot |D|^{\bigO(1)}$, as $N'(s)$
    has size $w^2$, and there are $|N'(s)|^2$ pairs in $N'(s)$. The time to check each pair is just the time to check
    equivalence between the $D \restrict{\assign_g}$ and $D\restrict{\assign_{g'}}$. This can be done in polynomial time, because in both cases we can  compute the canonical minimal TDD of the restriction in polynomial time \cite{tdd}. Since they both respect the same vtree, they are equivalent if and only if they are equal.
\end{proof}

In order for this proof to work out, we do not need $D$
to necessarily be a TDD. Indeed, $D$ could be any representation format, as long as it supports polynomial time restriction to a partial assignment and polynomial time equivalence checking
of two forms over the same vtree. Therefore, we can easily extend this theorem to much stronger data structures, such as sentential decision diagrams (SDDs) \cite{darwiche2011sdd} and deterministic structured decomposable negation normal form (d-SDNNF) \cite{darwichedsdnnf}, both of which are more succinct than TDD, but support polynomial-time equivalence checking over the same vtree.

 \begin{corollary}\label{cor:tdd-dsdnnf}
    Given an OBDD/SDD/d-SDNNF $D$ and a target vtree $(T,b)$, there is an algorithm that can construct a Boolean TDD $D_{(T,b)}$ over $(T,b)$ such that
    $D_{(T,b)} \equiv D$ in time $|D_{(T,b)}|^4 \cdot |D|^{\bigO(1)}$.
 \end{corollary}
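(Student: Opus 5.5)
We re-run the algorithm from the proof of Theorem~\ref{th:restructuring} essentially unchanged, with the input TDD $D$ replaced by the given OBDD, SDD or d-SDNNF $D$. As remarked after that proof, the algorithm uses $D$ only as an oracle for two tasks. The first is to decide whether $D\restrict{\assign} \equiv D\restrict{\assign'}$ for two partial assignments $\assign,\assign'$ of the same variable set $X_s$; the representative models are partial assignments of $X_s$, with $X_s$ the variables of $T_s$. The second is to test, at the leaves and at the root, whether a (partial or total) assignment is a model of $D$. The factor-based correctness argument refers only to the Boolean function $f$ computed by $D$, never to the syntax of $D$. So it transfers verbatim: by induction, each $N(s)$ is in bijection with the factors of $f$ with respect to $X_s$, the edge sets $E(g_C)$ partition $N(t_1)\times N(t_2)$, and the root labels give a Boolean TDD equivalent to $D$.

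The steps are as follows.

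\begin{enumerate}
\item \textbf{Closure under restriction.} Conditioning an OBDD, SDD or d-SDNNF on a partial assignment can be done in polynomial time and yields a structure of the same kind over the same order or vtree, of size at most $|D|$. For circuits, replace each literal leaf by the appropriate constant and propagate. Decomposability, determinism and structuredness survive, since they are preserved when inputs are set to constants.

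\item \textbf{Equivalence over a common order or vtree.} Two such restricted representations respect the same order or vtree, so their equivalence can be decided in polynomial time:
\begin{itemize}
\item for OBDDs, by reduction to canonical form;
\item for SDDs, by compressing and trimming to the canonical SDD;
\item for d-SDNNF over a fixed vtree, by the known polynomial equivalence test, e.g.\ via model counting of $g\wedge\neg h$, or via the probabilistic/algebraic identity test, which suffices here since we only need the claimed equivalence-checking capability quoted in the text.
\end{itemize}
The leaf test $D\restrict{x=0}\equiv D\restrict{x=1}$ and the root labelling (model checking of a full assignment) are special cases.

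\item \textbf{Running time.} Plug these oracles into the counting of the previous proof. There are at most $w^4\le |D_{(T,b)}|^4$ comparisons per node of $T$, each costing $|D|^{\bigO(1)}$, and there are $\bigO(|D|)$ vtree nodes. This gives $|D_{(T,b)}|^4\cdot|D|^{\bigO(1)}$.
\end{enumerate}

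The main obstacle is the d-SDNNF case of step 2. Equivalence of two d-SDNNFs over the same vtree is only known to be decidable via a randomized or counting-based argument, not via a canonical form. If a deterministic test is required, my first attempt would be to compare $g$ and $h$ by computing the model count of $g\wedge h$. This is possible because conjoining two d-SDNNFs over the same vtree gives a d-SDNNF of polynomial size through the standard product construction. I would then check that $\#g=\#h=\#(g\wedge h)$. The other cases are routine.
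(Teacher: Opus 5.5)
Your plan is the paper's argument. The paper justifies the corollary only by remarking that the algorithm of Theorem~\ref{th:restructuring} uses $D$ solely through restriction to partial assignments and equivalence tests between restrictions over the same vtree, so the factor-based correctness proof carries over verbatim. Two of your concrete justifications in step~2 are wrong as stated, though.

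\textbf{SDDs.} Passing to the canonical (compressed and trimmed) SDD is not a polynomial-time equivalence test. Van den Broeck and Darwiche showed that the canonical SDD for a vtree can be exponentially larger than an uncompressed SDD for the same function and vtree. The corollary allows arbitrary SDDs as input, and conditioning does not preserve compression anyway.

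\textbf{d-SDNNFs.} Counting the models of $g\wedge\neg h$ presupposes polynomial-time negation, which d-SDNNFs are known not to support, as the introduction recalls. The algebraic identity test only yields a randomized algorithm, while the corollary asserts a deterministic time bound.

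The fix is the test from your final paragraph, which should be the main argument rather than a fallback. Since $D\restrict{\assign}$ and $D\restrict{\assign'}$ respect the same vtree, the product construction for structured decomposable circuits gives a polynomial-size circuit for their conjunction. This needs the usual polynomial-time normalisation into vtree-aligned alternating layers first. Determinism survives because deterministic disjunctions $\bigvee_i a_i$ and $\bigvee_j b_j$ produce the deterministic disjunction $\bigvee_{i,j}(a_i\wedge b_j)$. Then $g\equiv h$ iff $\#g=\#h=\#(g\wedge h)$, with counts taken over the common variable set $X\setminus X_s$.

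OBDDs and SDDs are special d-SDNNFs, so this one deterministic test handles all three formats. With it your proof is complete---indeed more explicit than the paper's, which merely asserts that these formats admit polynomial-time equivalence checking over a common vtree.
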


 We can easily use this restructuring method to check equivalence between TDDs over different vtrees.

 \begin{lemma}\label{lem:tdd-different-vtrees}
    Let $D,D'$ be Boolean TDDs over different vtrees $(T,b), (T',b')$. It is possible to check whether $D \equiv D'$ in polynomial time.
 \end{lemma}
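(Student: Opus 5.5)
The plan is to reduce the question to an equivalence test over a single vtree by means of Theorem~\ref{th:restructuring}. First, apply Lemma~\ref{lem:reducedtdd} to $D'$ to obtain the reduced Boolean TDD $R'$ over $(T',b')$ equivalent to $D'$; this costs polynomial time and gives $|R'|\le |D'|$. Then run the restructuring algorithm of Theorem~\ref{th:restructuring} on input $D$ with target vtree $(T',b')$. This produces the reduced TDD $D_{(T',b')}\equiv D$ over $(T',b')$ in time $|D_{(T',b')}|^4\cdot|D|^{\bigO(1)}$. Finally, since reduced TDDs over a fixed vtree are canonical (Lemma~\ref{lem:reducedtdd}), we have $D\equiv D'$ if and only if $D_{(T',b')}$ and $R'$ coincide up to renaming of nodes. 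Because both objects are layered by the same vtree and the outputs are labelled $0,1$, this isomorphism test is straightforward: match the nodes of each bag top-down, or compare the factor representatives bottom-up.

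The main obstacle is the running time. Theorem~\ref{th:restructuring} is only output-polynomial, and if $D\not\equiv D'$, the canonical TDD of $D$ over $(T',b')$ could be much larger than $D'$. I would handle this by making the construction budgeted. The algorithm builds $D_{(T',b')}$ bottom-up and keeps only reduced bags, so each bag $N(s)$ it creates is exactly the set of factors of $D$ with respect to $X_s$. If $D\equiv D'$, these factors are also the factors of $D'$, so every bag has size at most $|N_{R'}(s)|$.

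The procedure is therefore as follows. Run the restructuring while checking at every vtree node $s$ whether $|N(s)|\le|N_{R'}(s)|$. If the check fails at some $s$, stop and answer ``not equivalent''. Otherwise, at each step the intermediate set $N'(s)$ has size at most $|R'|^2$ and the equivalence-class computation takes $|R'|^4\cdot|D|^{\bigO(1)}$ time, so the whole run is polynomial in $|D|+|D'|$. At the end, compare the output with $R'$ for equality as described above.

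For correctness of the early abort, observe that $R'$ is reduced, so $|N_{R'}(s)|$ is precisely the number of factors of $D'$ with respect to $X_s$. If $D\equiv D'$, the two functions have the same factors and the bound can never be violated; a violation therefore certifies $D\not\equiv D'$. There is one subtlety I would check: whether Boolean (complete) TDDs could have a bag containing a ``zero'' factor that is counted differently. This does not matter because both sides are computed as factors of the same function under the same definition.
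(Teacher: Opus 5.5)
Your proposal is correct and is essentially the paper's proof. You reduce $D'$, run the restructuring algorithm of Theorem~\ref{th:restructuring} on $D$ with target $(T',b')$ while aborting as soon as the construction outgrows what an equivalent function allows, and then compare the two reduced TDDs over the same vtree. Your per-bag budget $|N(s)|\le|N_{R'}(s)|$ is a sharper form of the paper's global cutoff at $|D'|^2$, and reducing $D$ first, as the paper does, is indeed unnecessary.

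For the final comparison, use the bottom-up matching you mention. Leaf bags are canonically labelled, so each node of an internal bag is identified by its $E$-set over already matched children. A purely top-down matching would instead have to find simultaneous row and column permutations for each pair of sibling bags, which is not straightforward.
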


 \begin{proof}
    We start by reducing both $D$ and $D'$. Next, we simply run the algorithm from Theorem \ref*{th:restructuring} on $D$ and $(T',b')$, 
    and let the resulting TDD be called $\hat{D}$. It is now sufficient to check whether $\hat{D} \equiv D'$. 
    Since $\hat{D}$ and $D'$ are defined over the same vtree, this final check can be done in polynomial time. It might be possible that,
    if $\hat{D}$ is not equivalent to $D'$, then $\hat{D}$ might be superpolynomially larger than $D'$. However, since we always
    minimise $\hat{D}$ ad hoc, we can immediately reject whwenever any intermediate result is larger than $|D'|^2$, as this is the largest
    possible blow-up that can occur during the restructuring process.

    The correctness of this construction is clear, as $\hat{D}$ is equivalent to $D$. Therefore, $D \equiv D'$ exactly if $\hat{D} \equiv D'$,
    and the runtime is polynomial since we either construct $\hat{D}$ in polynomial time, or reject immediately, whenever the circuit blows up too much.
 \end{proof}

 We can also slightly alter the method described in Theorem \ref*{th:restructuring} in order to construct an equivalence checking method
 between TDDs and much more powerful representation formats.

 \begin{lemma}\label{lem:eqtdddsdnnf}
    Let $D$ be a Boolean TDD over a vtree $(T,b)$, and $D'$ a d-SDNNF respecting an arbitrary vtree. then it is possible to check
    whether $D \equiv D'$ in polynomial time. 
 \end{lemma}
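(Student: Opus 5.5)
The plan is to reuse the bottom-up construction from the proof of Theorem~\ref{th:restructuring}, with the d-SDNNF $D'$ as the oracle, and to target the vtree $(T,b)$ of the TDD $D$. We add a size cut-off taken from $D$, which keeps the running time polynomial even when $D \not\equiv D'$. This mirrors how Lemma~\ref{lem:tdd-different-vtrees} was obtained from Theorem~\ref{th:restructuring}.

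\textbf{Step 1: preprocessing.} First reduce $D$ using Lemma~\ref{lem:reducedtdd}, obtaining the canonical reduced Boolean TDD $D_{\mathrm{red}}$ over $(T,b)$. Let $w$ be its width. By definition, for every $s\in V(T)$ the nodes of $N_{D_{\mathrm{red}}}(s)$ are in bijection with the factors of the function $D[1]$ with respect to $X_s$.

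\textbf{Step 2: restructure $D'$ onto $(T,b)$.} Run the algorithm of Theorem~\ref{th:restructuring} (in the form of Corollary~\ref{cor:tdd-dsdnnf}) on $D'$ with target vtree $(T,b)$. Only two operations on $D'$ are needed.
\begin{itemize}
\item Conditioning $D'$ on a partial assignment $\assign_g$. This is done by replacing literals by constants, and the result is again a d-SDNNF respecting the same vtree as $D'$.
\item Testing equivalence of two such conditioned d-SDNNFs over the same vtree. We assume this is polynomial, as the paper already does when stating Corollary~\ref{cor:tdd-dsdnnf}.
\end{itemize}
Hence each step of the construction is polynomial. Whenever some intermediate set $N(s)$ obtained after merging equivalence classes has more than $w$ elements, we stop and reject. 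Before merging, each $N'(s)$ has at most $w^2$ elements, so every step costs $w^4\cdot|D'|^{\bigO(1)}$ and the total time is polynomial in $|D|+|D'|$.

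\textbf{Step 3: justify the cut-off.} This is the step needing care. Suppose $D\equiv D'$. Then the factors of $D'$ with respect to each $X_s$ coincide with those of $D$. By the correctness argument of Theorem~\ref{th:restructuring}, each constructed $N(s)$ is in bijection with the factors of $D'$ with respect to $X_s$. So $|N(s)| = |N_{D_{\mathrm{red}}}(s)| \le w$, and an equivalent $D'$ is never rejected. Rejecting therefore correctly certifies $D\not\equiv D'$.

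\textbf{Step 4: final comparison.} If the construction completes, we obtain a reduced Boolean TDD $\hat D$ over $(T,b)$ with $\hat D\equiv D'$. We then test $\hat D\equiv D_{\mathrm{red}}$ using polynomial-time equivalence of TDDs over the same vtree. By canonicity (Lemma~\ref{lem:reducedtdd}) this amounts to checking that the two reduced forms coincide. Since $\hat D \equiv D'$, the answer is exactly whether $D\equiv D'$.

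The main obstacle is Step~3, together with the precise claim that the intermediate $N(s)$ are exactly the factors of $D'$, including the completeness convention. In particular, a constant-zero factor may or may not be realised as a node, and the construction must treat it the same way on both sides. If there is any slack here, I would loosen the threshold to $|D_{\mathrm{red}}|$ or $|D_{\mathrm{red}}|^2$, which still gives polynomial time.
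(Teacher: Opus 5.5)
Your proposal is correct and follows the paper's proof: restructure the d-SDNNF $D'$ onto $(T,b)$ via Corollary~\ref{cor:tdd-dsdnnf}, reject early if the construction grows too large, and otherwise compare the result with $D$ over the same vtree. Your Step~3 states the cut-off argument more explicitly than the paper does: the bound must come from the TDD $D$, since the constructed sets $N(s)$ count the factors of $D'$, and these coincide with those of $D$ when $D\equiv D'$. Using the threshold $w+1$ instead of $w$ already absorbs the possibly unrealised zero factor you were worried about.
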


\begin{proof}
    We can simply use Corollary \ref*{cor:tdd-dsdnnf} in order to construct a TDD $\hat{D}$ equivalent to $D'$ over the vtree $(T,b)$,
    and then we can simply check whether $D$ and $\hat{D}$ are equivalent in polynomial time. Using the same argument as in Lemma
    \ref*{lem:tdd-different-vtrees}, we can immediately reject, if any intermediate step during the construction of $\hat{D}$ grows too large.
\end{proof} 
\section{Conclusion}

In this paper, we have shown two interesting properties regarding TDDs: first we have shown that every TDD can be simulated by an OBDD with only quasipolynomial blowup. And second, we have shown that TDDs admit efficient restructuring, which opens up several more nice properties such as polynomial-time equivalence checking between TDDs and data structures such as d-SDNNFs.
With that said, there are still several open research questions: first of all, can either restructuring in time polynomial in the input and output or the polynomial time equivalence test be lifted to SDDs or even all d-SDNNFs? 

Note that for SDDs the canonical form is not of minimal size, whereas canonical forms of TDDs (and OBDDs) are minimal. On the other hand, SDDs are exponentially more succinct than OBDDs, whereas TDDs are not (as we now know). Can we have both? That is, does there exists a structured representation format with canonical minimal forms (for a given vtree and a Boolean function) that are exponentially more succinct than OBDDs?

Finally, as it has been demonstrated that TDDs are quite efficient  for model counting, it would be good to further explore how well TDDs perform in praxis, in particular in the context of query evaluation.

\bibliography{refs}
\nocite{*}
\end{document}